\newtheorem{theorem}{Theorem}
\newtheorem{corollary}{Corollary}
\newcommand{\Tr}{{\rm Tr}}
\renewcommand{\vec}[1]{\boldsymbol{#1}} 
\begin{document}
	\title{Quantum Jarzynski Equality in Open Quantum Systems from the One-Time Measurement Scheme}

	\author{Akira Sone}
	\email{asone@lanl.gov}
	\affiliation{Theoretical Division, Los Alamos National Laboratory, Los Alamos, New Mexico 87545}
	\affiliation{Center for Nonlinear Studies, Los Alamos National Laboratory, Los Alamos, NM 87545}
	\affiliation{Research Laboratory of Electronics and Department of Nuclear Science and Engineering, Massachusetts Institute of Technology, Cambridge, Massachusetts 02139 }
	
	\author{Yi-Xiang Liu}
	\affiliation{Research Laboratory of Electronics and Department of Nuclear Science and Engineering, Massachusetts Institute of Technology, Cambridge, Massachusetts 02139 }
	
	\author{Paola Cappellaro}
	\email{pcappell@mit.edu}
	\affiliation{Research Laboratory of Electronics and Department of Nuclear Science and Engineering, Massachusetts Institute of Technology, Cambridge, Massachusetts 02139 }
	\affiliation{Department of Physics, Massachusetts Institute of Technology, Cambridge, Massachusetts 02139 }

	\begin{abstract}
		In open quantum systems, a clear distinction between work and heat is often challenging, and extending the quantum Jarzynski equality to systems evolving under general quantum channels beyond unitality remains an open problem in quantum thermodynamics. 
		{In this Letter, we introduce well-defined notions of \textit{guessed quantum heat} and \textit{guessed quantum work}, by exploiting the one-time measurement scheme, which only requires an initial energy measurement on the system alone. We derive a modified quantum Jarzynski equality and the principle of maximum work with respect to the guessed quantum work, which requires the knowledge of the system only.  We further show the significance of guessed quantum heat and work by linking them to the problem of quantum hypothesis testing.  }
	\end{abstract} 
	\maketitle

	\subparagraph{Introduction --}Understanding the laws of thermodynamics at the most fundamental level requires clarifying the thermodynamic properties of quantum systems, {and especially the contributions of coherence and correlations in the concept of work and heat are of fundamental interest  ~\cite{Binder19, DeffnerBook19, Santos19, Mohammady19, Scandi19,Bera17}}. 
	In quantum microscopic systems, fluctuations are inevitable; therefore, the laws of thermodynamics have to be given by taking into account the effects of these quantum fluctuations. 
	A powerful insight into fluctuations is provided by Jarzynski equality~\cite{Jarzynski97}, one of the few equalities in thermodynamics, which relates the fluctuating work in a finite-time,  nonequilibrium process with the equilibrium free energy difference:
	\begin{equation}
	\langle e^{-\beta W}\rangle=e^{-\beta\Delta F}\,,
	\label{eq:standard}
	\end{equation}
	Here $\beta=1/T$ is the inverse temperature {(we set the Boltzmann constant $k_B=1$)}; $W$ is the work; and $\Delta F_S$ is the equilibrium free energy difference  defined by the initial, $H_S\left(0\right)$, and final Hamiltonian, $H_S\left(t\right)$.
	The equality is independent of process details: the final state of the process does not have to be thermal, and the temperature could change. 
	Jarzynski equality can be also regarded as the generalization of the second law of thermodynamics, since through Jensen's inequality it yields the principle of maximum work: $\langle W\rangle \geq \Delta F$.

	The quantum version of the Jarzynski equality--- the quantum Jarzynski equality --- was developed by focusing on closed quantum systems in the two-time measurement scheme~\cite{ Tasaki00,Kurchan01}, which defines the work as the energy difference between the initial and final energy projection measurements in a single trajectory. 
	Jarzynski equality has been later extended to open quantum systems subject to dephasing process~\cite{Mukamel03}, unital maps~\cite{Rastegin13}, random projection measurements~\cite{Gherardini18,Thai18}, or feedback control~\cite{Sagawa10a, Sagawa10b}; it has been verified experimentally in numerous systems, such as biomolecular systems~\cite{Collin05}, trapped ions~\cite{Smith18, An15}, NV centers~\cite{Gomez19}, and NMR systems~\cite{Tiago14}.

	Despite this progress, a general formulation of the quantum Jarzynski equality for arbitrary open quantum systems is still lacking. This stems from the fundamental challenge that work and heat are not direct observables in quantum mechanics~\cite{Talkner07}: while in closed systems work can be simply identified with energy variations, in open quantum systems a clear distinction between work and heat is not always possible~\cite{Campisi11}. While some insight can be gained by theoretically assuming knowledge of the bath state~\cite{Alhambra16, Campisi09, Jarzynski04, Manzano18, Barra15}, in practice the bath cannot be measured. 
	One solution is to assume that a particular process does not involve heat exchange. For example, by assuming heat exchange to be absent in the dephasing process because there is no population decay, one can prove that the quantum Jarzynski equality has the standard form in Eq.~\eqref{eq:standard} \cite{Mukamel03,Smith18}. Similar results~\cite{Rastegin13} hold for unital maps (that is, identity-preserving maps), which describe only  processes that can be microscopically reversed by monitoring the bath with feedback~\cite{Buscemi05, Gregoratti03}.

	There have been several efforts to extend the quantum Jarzynski equality to nonunital maps~\cite{Morikuni17,Kafri12, Albash13, Goold15, Albash13, Rastegin14, Goold14}, by using the two-time measurement scheme. However, this either requires a measurement on the bath~\cite{Talkner09}, or it faces a fundamental issue~\cite{Marti17}, related to the loss of coherence in energy measurements. In open quantum system, the second energy measurement on the system unavoidably destroys system-bath correlations, making it impossible to distinguish  work and heat by energy measurements on the system alone, except for unitary or unital evolutions. In addition, the two-time measurement scheme neglects the information contribution due to the backaction of the second measurement~\cite{Deffner16}. {To improve on the results obtained by  the two-point measurement scheme, recent works have used  dynamic Bayesian networks~\cite{Micadei20} and Maggenau-Hill quasiprobability~\cite{Levy19}, or, for closed quantum systems, avoided completely the second measurement~\cite{Deffner16, Beyer20}.}

	In this Letter, we overcome these issues by introducing a novel definition of guessed quantum heat and work for general quantum channels, which lead to a quantum Jarzynski equality that takes into account system-bath correlations. 
	We employ the one-time measurement scheme developed in Ref.~\cite{Deffner16} for closed quantum systems. This protocol only requires us to measure the initial energy of the system (which is initially decoupled from the thermal bath) and to evaluate the expectation value of the difference between final and initial energy of the system by introducing the concept of ``best possible guess'' of the final state~\cite{Deffner16}. 
	Avoiding the final projective measurement of the energy provides a more precise description of the thermodynamic process than the traditional two-time measurement scheme, since it avoids the  backaction by the second measurement and the  ensuing information loss~\cite{Deffner16}. This protocol yields a modified quantum Jarzynski equality in terms of the information free energy~\cite{Deffner12, Still12, Deffner16, Sagawa15, Sivak12}, and a tighter bound on the second law of thermodynamics.

	Our main result is based on a generalization of the results in Ref.~\cite{Deffner16} to general quantum channels for open quantum systems in contact with a thermal bath. Inspired by the one-time measurement scheme, 
	we introduce well-defined notions of \textit{guessed quantum heat} and \textit{guessed quantum work} { that only require measurements on the system}. With these quantities, we can derive a modified quantum Jarzynski equality (see Theorem~\ref{theorem}) and further update the principle of maximum work (Corollary~\ref{corollary}). {Specifically, the bound in the principle of maximum work requires  knowledge of the system alone.} Not only the guessed quantum heat and work provide insights into the dynamics of general open quantum systems, {as we show with several examples~\cite{supp}}, but they acquire further operational meanings from their relationship to quantum hypothesis testing.

	\subparagraph{One-time measurement scheme --}
	We consider a composite system comprising the target system $(\mathcal{H}_S)$ and the bath $(\mathcal{H}_B)$, and assume we can only measure the system. Let $H_S(t)$ be the system Hamiltonian, which is time dependent, and $H_B$ the time-independent bath Hamiltonian. The total Hamiltonian, $H_{\text{tot}}(t)=H_S(t)\otimes\openone_B+\openone_S\otimes H_B+V(t)$,
	includes an interaction, $V(t)$, between system and bath  (we assume $V(t)\!=\!0$ for $t\leq0$).
	
	The initial state of the composite system is the product $\tau_S(0)\otimes\tau_B$ of thermal Gibbs states at $t=0$ for  system and bath,  $\tau_S(t)=e^{-\beta H_S(t)}/Z_S(t)$
	and $\tau_B=e^{-\beta H_B}/Z_B$.
	Here, $Z_A(t)$ are the partition functions, $Z_A(t)=\Tr\left[e^{-\beta H_A(t)}\right]$ for $A=S,B$. 
	The composite system evolves under a unitary operator $U_t$ as $U_t(\tau_S(0)\otimes\tau_B)U_t^{\dagger}$ which satisfies the usual Schr\"{o}dinger's equation $\partial_t U_t=-i H_{\text{tot}}(t)U_t$ (we set Plank constant $\hbar=1$).
	
	At time $t=0$, we measure the energy of the system alone. Suppose that we obtain a value $\epsilon$, corresponding to one of the eigenvalues of $H_S(0)$, with probability $e^{-\beta\epsilon}/Z_S(0)$. Then, the postmeasurement state of the system is the corresponding eigenstate: $\ket\epsilon\!\bra\epsilon$. Therefore, the evolved state of the system after the measurement is
	\begin{align*}
		\Phi_t \left(\ket\epsilon\!\bra\epsilon\right)\equiv\Tr_B\left[U_t(\ket\epsilon\!\bra\epsilon\otimes\tau_B)U_t^{\dagger}\right]\,,
	\end{align*}
	where $\Phi_t$ is a  {completely positive trace-preserving (CPTP)} map in $\mathcal{H}_S$. 
	This evolution includes contributions from heat exchange, because of the system coupling to the thermal bath, and from work due to the time dependence of the system Hamiltonian and to system-bath interaction, which exists even for time-independent Hamiltonians. It is however difficult to distinguish the two contributions, and, indeed, a measurement on the system alone would not be fully informative.

	After the evolution, we assume that we do not perform a final measurement, but still estimate the energy difference along a certain realization trajectory, $\Delta \tilde{E}(\epsilon)$, from the expectation value of the system Hamiltonian $H_S(t)$ with respect to $\Phi_t(\ket\epsilon\!\bra\epsilon)$:
	\begin{align*}
		\Delta \tilde{E}(\epsilon)=\Tr\left[H_S(t)\Phi_t(|\epsilon\rangle\langle\epsilon|)\right]-\epsilon\,.
	\end{align*}
	The probability distribution of the internal energy difference is given by
	\begin{align*}
		\tilde{P}(\Delta E)=\sum_{\epsilon}\frac{e^{-\beta\epsilon}}{Z_S(0)}\delta\Big(\Delta E-\Delta\tilde{E}(\epsilon)\Big)\,.
	\end{align*}
	This is a \textit{good} definition because it yields the correct expectation value  of the internal energy difference $\langle\Delta E\rangle$. Indeed, denoting with $\langle\cdots\rangle_{\tilde{P}}$ the average with respect to the distribution $\tilde{P}$, we have
	\begin{equation}
	\begin{split}
	\langle \Delta E\rangle_{\tilde{P}}=&\int\tilde{P}(\Delta E)\Delta Ed(\Delta E) \\
	=&\Tr\left[H_S(t)\Phi_t(\tau_S(0))\right]-\Tr\left[H_S(0)\tau_S(0)\right]\\
	\equiv&\langle \Delta E\rangle\,.
	\end{split}
	\label{eq:E}
	\end{equation}

	By using $\tilde{P}(\Delta E)$, we can calculate the averaged exponentiated internal energy difference:
	\begin{align*}
		\begin{split}
			\langle e^{-\beta \Delta E}\rangle_{\tilde{P}}&=\int \tilde{P}(\Delta E)e^{-\beta\Delta E}d(\Delta E)\\
			&=\frac{1}{Z_S(0)}\sum_{\epsilon}e^{-\beta\Tr\left[H_S(t)\Phi_t(|\epsilon\rangle\langle \epsilon|)\right]}\,.
		\end{split}
	\end{align*}
	We can interpret this expression by introducing a new partition function 
	\begin{align*}
		\tilde{Z}_S(t)\equiv \sum_{\epsilon}e^{-\beta\Tr\left[H_S(t)\Phi_t(|\epsilon\rangle\langle \epsilon|)\right]}\,,
	\end{align*}
	yielding
	\begin{equation}
	\langle e^{-\beta \Delta E}\rangle_{\tilde{P}}=\frac{\tilde{Z}_S(t)}{Z_S(0)}=e^{-\beta\Delta\tilde{F}_S}\,,
	\label{eq:Jarzynski1}
	\end{equation}
	where $\Delta \tilde{F}_S=\tilde{F}_S(t)-F_S(0)$
	is the difference between the initial, thermal equilibrium free energy, $F_S(0)=-\beta^{-1}\ln Z_S(0)$, and the equilibrium free energy corresponding to $\tilde{Z}_S(t)$, $\tilde{F}_S(t)=-\beta^{-1}\ln \tilde{Z}_S(t)$.
	We note that this relation has the form of a typical Jarzynski equality, linking the energy fluctuation to the free energy; however, to give this relation a physical meaning we need to further investigate the significance of $\tilde F_S(t)$ by linking this quantity to an effective state.

	\subparagraph{Guessed Quantum Heat \& Guessed Quantum Work --}
	Following Ref.~\cite{Deffner16}, we introduce the \textit{best possible
		guess} for the final system state. This thermal state, $\Theta_{SB}(t)$, can be found by maximizing the system-bath Von-Neumann entropy $\mathcal{S}_{SB}(t)=-\Tr\left[\Theta_{SB}(t)\ln\Theta_{SB}(t)\right]$, under the constraint of a fixed, average energy for the system alone, time-evolved after the one-time projective measurement. In other words, we apply the principle of maximum entropy~\cite{Jaynes57} to find the state with minimum information content, under the given constraints.
	The best possible guessed state can be given by
	\begin{align*}
		\Theta_{SB}(t)=\sum_{\epsilon}p(\epsilon)U_t(|\epsilon\rangle\langle\epsilon|\otimes\tau_B)U_t^{\dagger}\,,
	\end{align*}
	where the probabilities
	\begin{align*}
		p(\epsilon)=\frac{e^{-\beta\Tr\left[H_S(t)\Phi_t(|\epsilon\rangle\langle\epsilon|)\right]}}{\tilde{Z}_S(t)}\,,
	\end{align*}
	are found from entropy maximization under the constraint $E_S=\Tr\left[(H_S(t)\otimes\openone_B)\Theta_{SB}(t)\right]$ and that the postmeasurement state of the composite system after the initial energy measurement is given by $|\epsilon\rangle\langle\epsilon|\otimes\tau_B$, before evolving under $U_t$ (see \cite{supp}).

	We note that here we assumed an isothermal process for the composite system, as expected for a closed quantum system. Then, $\Theta_{SB}(t)$ can be seen as a thermal state at the initial temperature $\beta$, even if it is \textit{not} the thermal state of the composite system at time $t$, $\tau_S(t)\otimes\tau_B$.
	The difference can be quantified by their relative entropy $D\left[\Theta_{SB}(t)||\tau_S(t)\otimes\tau_B\right]\equiv\Tr\left[\Theta_{SB}(t)\ln\Theta_{SB}(t)\right]-\Tr\left[\Theta_{SB}(t)\ln(\tau_S(t)\otimes\tau_B)\right]$. 
	The relative entropy helps clarifying not only the thermodynamic contribution from the information difference of the states, but also an operational meaning of our results in terms of quantum hypothesis testing. 
	By defining 
	{
		\begin{align*}
			\langle \tilde{Q}\rangle_B \equiv \Tr\left[H_B\tau_B\right]-\Tr\left[(\openone_S\otimes H_B)\Theta_{SB}(t)\right]\,,  
		\end{align*}
	} we write $D$ as~\cite{supp}
	\begin{equation}
	D\left[\Theta_{SB}(t)||\tau_S(t)\otimes\tau_B\right]
	=-\ln\frac{\tilde{Z}_S(t)}{Z_S(t)}-\beta\langle \tilde{Q}\rangle_B\,.  
	\label{eq:DandHeat}
	\end{equation}
	Since $\langle \tilde{Q}\rangle_B$ represents the thermal bath energy loss, we can identify it as a kind of heat~\cite{Funo18}, that we  call ``guessed quantum heat'' as it arises from the definition of the best possible guessed state $\Theta_{SB}(t)$. 
	We can similarly introduce the notion of ``guessed quantum work'' $\tilde{W}$, based on the first law of thermodynamics: 
	\begin{equation}
	\tilde{W}\equiv \Delta E -\langle\tilde{Q}\rangle_B\,.
	\label{eq:guessedW}
	\end{equation}
	Then, we can obtain the following theorem:
	\begin{theorem}
		The quantum Jarzynski equality for the guessed quantum work is
		\begin{equation}
		\langle e^{-\beta\tilde{W}}\rangle_{\tilde{P}}=e^{-\beta \Delta F_{S}}e^{-D\left[\Theta_{SB}(t)||\tau_S(t)\otimes\tau_B\right]}\,.
		\label{eq:Jarzynski3}
		\end{equation}
		\label{theorem}
	\end{theorem}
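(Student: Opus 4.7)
The plan is to reduce the claim to the intermediate Jarzynski-type identity already derived in Eq.~\eqref{eq:Jarzynski1}, by treating the guessed heat as a constant offset rather than a random variable. The key structural observation is that $\langle\tilde{Q}\rangle_B$, as defined above Eq.~\eqref{eq:DandHeat}, depends only on the fixed state $\Theta_{SB}(t)$ and on $\tau_B$; it does not depend on the outcome $\epsilon$ of the initial projective measurement. Hence, in the trajectory-level definition $\tilde{W}=\Delta E-\langle\tilde{Q}\rangle_B$, only the $\Delta E$ piece carries the random dependence through $\tilde{P}(\Delta E)$.

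Concretely, I would first pull the deterministic factor out of the expectation,
\begin{equation*}
\langle e^{-\beta\tilde{W}}\rangle_{\tilde{P}}
= e^{\beta\langle\tilde{Q}\rangle_B}\,\langle e^{-\beta\Delta E}\rangle_{\tilde{P}},
\end{equation*}
and then invoke Eq.~\eqref{eq:Jarzynski1} to replace the remaining average by $\tilde{Z}_S(t)/Z_S(0)$. This gives
\begin{equation*}
\langle e^{-\beta\tilde{W}}\rangle_{\tilde{P}}
= \frac{\tilde{Z}_S(t)}{Z_S(0)}\,e^{\beta\langle\tilde{Q}\rangle_B}.
\end{equation*}

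The final step is to rewrite the right-hand side in terms of the true free-energy difference $\Delta F_S=-\beta^{-1}\ln[Z_S(t)/Z_S(0)]$ and the relative entropy. Multiplying and dividing by $Z_S(t)$ and then substituting Eq.~\eqref{eq:DandHeat} yields
\begin{equation*}
\frac{\tilde{Z}_S(t)}{Z_S(0)}\,e^{\beta\langle\tilde{Q}\rangle_B}
= \frac{Z_S(t)}{Z_S(0)}\cdot\frac{\tilde{Z}_S(t)}{Z_S(t)}\,e^{\beta\langle\tilde{Q}\rangle_B}
= e^{-\beta\Delta F_S}\,e^{-D[\Theta_{SB}(t)\|\tau_S(t)\otimes\tau_B]},
\end{equation*}
which is exactly Eq.~\eqref{eq:Jarzynski3}.

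I do not expect a serious technical obstacle; the proof is essentially a rearrangement of Eqs.~\eqref{eq:Jarzynski1}, \eqref{eq:DandHeat}, and \eqref{eq:guessedW}. The one subtle point worth stating carefully is the justification that $\langle\tilde{Q}\rangle_B$ is trajectory-independent and therefore factors out of the $\tilde{P}$-average; this is what allows the first-law-inspired definition of $\tilde{W}$ to interact cleanly with the exponential expectation. Once this is emphasized, the rest is algebraic bookkeeping between the two partition functions $Z_S(t)$ and $\tilde{Z}_S(t)$, with the relative entropy serving precisely as the logarithm of their ratio (shifted by $\beta\langle\tilde{Q}\rangle_B$).
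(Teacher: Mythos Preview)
Your proof is correct and follows essentially the same route as the paper's: both arguments combine Eq.~\eqref{eq:Jarzynski1}, Eq.~\eqref{eq:DandHeat}, and the definition~\eqref{eq:guessedW}, differing only in the order of steps (the paper first rewrites $\Delta\tilde{F}_S$ in terms of $\Delta F_S$, $\langle\tilde{Q}\rangle_B$, and the relative entropy, then absorbs the heat term into $\tilde{W}$ at the end, whereas you factor out the heat first). Your explicit emphasis that $\langle\tilde{Q}\rangle_B$ is trajectory-independent is a useful clarification that the paper leaves implicit.
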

	\begin{proof}
		From the definition of the equilibrium free energy, $F_S(t)=\beta^{-1}\ln Z_S(t)$, we can write
		$\tilde{F}_S(t)-F_S(t)=\langle\tilde{Q}\rangle_B+\beta^{-1}D\left[\Theta_{SB}(t)||\tau_S(t)\otimes\tau_B\right]$. 
		Defining $\Delta F_S=F_S(t)-F_S(0)$, we have
		\begin{align*}
			\Delta\tilde{F}_S =\Delta F_S+\langle\tilde{Q}\rangle_B+\beta^{-1}D\left[\Theta_{SB}(t)||\tau_S(t)\otimes\tau_B\right]\,, 
		\end{align*}
		and substituting into Eq.~\eqref{eq:Jarzynski1}, we obtain
		\begin{equation}
		\langle e^{-\beta\Delta E}\rangle_{\tilde{P}}=e^{-\beta\Delta F_S}e^{-\beta\langle\tilde{Q}\rangle_B}e^{-D\left[\Theta_{SB}(t)||\tau_S(t)\otimes\tau_B\right]}\,,
		\label{eq:Jarzynski2}
		\end{equation}
		which yields Eq.~\eqref{eq:Jarzynski3} using the definition of guessed quantum work in Eq.~\eqref{eq:guessedW}.
	\end{proof}
	Note that $\tilde{F}_S(t)$ plays the role of an information free energy~\cite{Deffner12, Still12, Deffner16, Sagawa15, Sivak12} computed with respect to the best possible guessed state $\Theta_{SB}(t)$.

	We verify Eq.~\eqref{eq:Jarzynski2} by considering several simple models in~\cite{supp}. We first discuss time-independent two-qubit interacting model, such as two-qubit dephasing.
	This model can be realized experimentally in two-qubit systems, such as nitrogen-vacancy (NV) centers in diamond~\cite{Liu19}, where $\mathcal{H}_S$ and $\mathcal{H}_B$ are the truncated electronic spin system and nuclear spin system associated with the NV center. We also consider an archetypal model of dephasing, the spin-boson model~\cite{Schlosshauer} without time dependence. In particular, by not assuming \textit{a priori} that dephasing precludes heat exchange, we find that we can define guessed quantum heat for dephasing maps, and thus guessed quantum work contains not only contributions from the Hamiltonian time dependence, but also from the interaction of system and bath.

	From Theorem~\ref{theorem}, we obtain the following corollary:
	\begin{corollary}[Principle of maximum guessed quantum work]
		The average of the guessed quantum work satisfies the following inequality:
		\begin{equation}
		\langle\tilde{W}\rangle \geq \Delta F_S+\beta^{-1}D\left[\tilde{\rho}_S(t)||\tau_S(t)\right]\,,
		\label{eq:2ndlaw}
		\end{equation}
		where $\tilde{\rho}_S(t)\equiv\Tr_B\left[\Theta_{SB}(t)\right]$.
		\label{corollary}
	\end{corollary}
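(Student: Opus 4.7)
The plan is to combine the Jarzynski equality from Theorem~\ref{theorem} with Jensen's inequality, and then tighten the resulting bound into one that only involves the reduced system state by invoking the monotonicity of the relative entropy under partial trace.

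First I would apply Jensen's inequality $\langle e^{-\beta \tilde{W}}\rangle_{\tilde{P}} \geq e^{-\beta\langle\tilde{W}\rangle_{\tilde{P}}}$ to Eq.~\eqref{eq:Jarzynski3}, and take the logarithm of both sides. After dividing by $-\beta$ this immediately gives an intermediate inequality
\begin{equation*}
\langle\tilde{W}\rangle \geq \Delta F_S + \beta^{-1} D\!\left[\Theta_{SB}(t)\,||\,\tau_S(t)\otimes\tau_B\right],
\end{equation*}
where I have used that the average of $\tilde{W}$ under $\tilde{P}$ coincides with the physical expectation $\langle\tilde{W}\rangle$ by construction, just as in Eq.~\eqref{eq:E}.

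Next I would use the monotonicity of the quantum relative entropy under CPTP maps, specialized to the partial trace over the bath. Since $\Tr_B[\tau_S(t)\otimes\tau_B] = \tau_S(t)$ (because $\tau_B$ is normalized) and $\Tr_B[\Theta_{SB}(t)] = \tilde{\rho}_S(t)$ by definition, monotonicity yields
\begin{equation*}
D\!\left[\Theta_{SB}(t)\,||\,\tau_S(t)\otimes\tau_B\right] \;\geq\; D\!\left[\tilde{\rho}_S(t)\,||\,\tau_S(t)\right].
\end{equation*}
Chaining this with the intermediate inequality produces exactly Eq.~\eqref{eq:2ndlaw}, giving a lower bound that, crucially, depends only on information accessible on the system $\mathcal{H}_S$, consistent with the spirit of the one-time measurement scheme.

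I do not foresee a genuine obstacle: both Jensen's inequality for the convex function $x \mapsto e^{-\beta x}$ and the Lindblad--Uhlmann monotonicity of relative entropy are standard. The only point that requires care is the bookkeeping of signs when moving the relative entropy term across the exponential and the logarithm, and verifying that the partial trace of the reference state $\tau_S(t)\otimes\tau_B$ reproduces $\tau_S(t)$ so that the right-hand side of the monotonicity inequality takes the claimed form $D[\tilde{\rho}_S(t)\,||\,\tau_S(t)]$.
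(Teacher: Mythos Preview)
Your proposal is correct and follows essentially the same route as the paper's own proof: apply Jensen's inequality to the Jarzynski equality of Theorem~\ref{theorem} to obtain the intermediate bound involving $D[\Theta_{SB}(t)\,||\,\tau_S(t)\otimes\tau_B]$, and then invoke monotonicity of the relative entropy under the partial trace to reduce to $D[\tilde{\rho}_S(t)\,||\,\tau_S(t)]$. The only cosmetic difference is that the paper phrases the first step as applying Jensen to Eq.~\eqref{eq:Jarzynski2} together with Eqs.~\eqref{eq:E} and~\eqref{eq:guessedW}, whereas you apply it directly to Eq.~\eqref{eq:Jarzynski3}; since $\langle\tilde{Q}\rangle_B$ is a deterministic shift, the two are equivalent.
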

	\begin{proof}
		Applying Jensen's inequality to Eq.~\eqref{eq:Jarzynski2}, and using the equivalence in Eq.~\eqref{eq:E}, from Eq.~\eqref{eq:guessedW}, we obtain
		\begin{equation}
		\langle \tilde{ W}\rangle\geq \Delta F_{S}+\beta^{-1}D\left[\Theta_{SB}(t)||\tau_S(t)\otimes\tau_B\right]\,.
		\label{eq:correlate}
		\end{equation}
		The monotonicity of the quantum relative entropy ~\cite{Nielsen00b} with respect to the partial trace leads to Eq.~\eqref{eq:2ndlaw} via 
		\begin{align*}
			D\left[\Theta_{SB}(t)||\tau_S(t)\otimes\tau_B\right]\geq D\left[\Tr_B\left[\Theta_{SB}(t)\right]||\tau_S(t)\right]\,.
		\end{align*}
		
	\end{proof}

	\subparagraph{Discussion --}
	The emergence of the guessed quantum heat and work can be understood as the results of system-bath correlations deriving from their interaction. As the one-time measurement does not erase such correlations, in contrast to the two-time measurement protocol, we are able to define and distinguish heat and work (their ``guessed'' values), {which are derived from the well-defined  guessed state}, even in cases such as dephasing where the two-time measurement protocol predicts no heat exchange. 
	
	Still, our results are consistent with well-known results for closed quantum systems. Since Eqs.~\eqref{eq:Jarzynski3} and \eqref{eq:2ndlaw} are generalizations of results in Ref.~\cite{Deffner16}, we can recover the  closed quantum system scenario by setting $V(t)=0$. Then, there is no energy exchange with the bath, i.e., no heat, and the guessed quantum work is simply the exact quantum work, given by the energy difference, $\langle \tilde{W}\rangle_{\tilde{P}}=\langle W\rangle=\langle \Delta E\rangle$. {Also, for the pure dephasing process, the guessed quantum work  coincides with the exact work, as we can see from examples in~\cite{supp}.}
	
	In contrast, for open quantum systems  Eqs.~\eqref{eq:Jarzynski3} and~\eqref{eq:correlate} introduce an additional thermodynamic contribution  to the work capacity,  given by the information difference between thermal and guessed state~\cite{Deffner16}, as quantified by the relative entropy. More precisely, the contribution arises from the difference between the product thermal state $\tau_S(t)\otimes\tau_B$ and the system-bath correlated state $\Theta_{SB}(t)$. This implies that system-bath correlations can increase the work capacity of the system.
	
	We indeed obtain a bound  for the principle of  maximum guessed quantum work 
	that importantly only requires knowledge of the system's state  (Eq.~\eqref{eq:2ndlaw}). 
	Avoiding measurements on the bath  is essential, as this  bound describes the maximum usable and extractable energy that the system can provide, which is of relevance for experiments and practical applications.
	
	To this goal, we were able to exploit the concept of ``guessed state'' not only to isolate the contribution from the measurement on the system, as done previously, but also to analyze the more realistic situation where the bath is  unmeasurable. In this scenario, then, $\Theta_{SB}(t)$ is a good effective state, because it can not only be estimated but it also {gives a bound to the guessed quantum work}, and similarly guessed quantum heat and work assume a well-defined meaning.

	Finally, we note that Eq.~(\ref{eq:Jarzynski3}) has operational meaning associated with the scaling of the quantum hypothesis testing from the quantum Stein's lemma~\cite{Ogawa00, Brandao10}. {The quantum relative entropy $D[\Theta_{SB}(t)||\tau_S(t)\otimes\tau_B]$ quantifies the distance between the guessed state $\Theta_{SB}(t)$ and the product Gibbs' state defined by the initial temperature and the final Hamiltonians of the system and bath $\tau_S(t)\otimes\tau_B$. This is associated with the type-I\!I error probability that the observation indicates the state to be $\Theta_{SB}(t)$ when the real state was $\tau_S(t)\otimes\tau_B$ (see ~\cite{supp} for details).}

	Assume that we prepare $n$ independent and identically distributed copies of $\Theta_{SB}(t)$ and $\tau_S(t)\otimes\tau_B$. {Here, $\Theta_{SB}(t)$ and  $\tau_S(t)\otimes\tau_B$ are seen as the null and alternative hypothesis, respectively.} Let us define $\mathcal{B}_n$ as the minimum type-I\!I error probability in quantum Stein's lemma that the true state is $(\tau_S(t)\otimes\tau_B)^{\otimes n}$ while the inferred state is $\Theta_{SB}^{\otimes n}(t)$. Then, in the limit of large $n$, we have 	
	\begin{equation}
	\lim_{n\to\infty}\frac{1}{n}\ln\left( \mathcal{B}_n\right)=-D\left[\Theta_{SB}(t)||\tau_S(t)\otimes\tau_B\right]\,.
	\label{eq:typeIIprobability}
	\end{equation}
	Relating the guessed quantum work $\tilde{W}$ (see Eq.~\eqref{eq:Jarzynski3}) with the type-I\!I probability $\mathcal{B}_n$,  
	\begin{equation}
	\langle e^{-\beta(\tilde{W}-\Delta F_S)}\rangle_{\tilde{P}}=	\lim_{n\to \infty}\left(\mathcal{B}_n\right)^{\frac{1}{n}}\,,
	\label{eq:operationalmeaning}
	\end{equation}
	{we can show that the guessed quantum work is asymptotically associated with the scaling of the quantum hypothesis testing when the true state is $\tau_S(t)\otimes\tau_B$ while the experimental result indicates $\Theta_{SB}(t)$.}

	In conclusion, we employ the one-time measurement scheme to derive a modified quantum Jarzynski equality and the principle of maximum quantum work in open quantum systems described by general quantum channels. We  demonstrate that the one-point measurement scheme enables  defining  heat and work with respect to the best possible guessed state, by introducing {well-defined} concepts of guessed quantum heat and guessed quantum work. Our work generalizes the  results obtained in  Ref.~\cite{Deffner16} for closed quantum systems, where guessed quantum work coincides with the exact quantum work. The extension to open quantum systems provides novel insights to the thermodynamics of both unital and generic quantum channels, by elucidating the role of correlations between system and bath in producing work and heat exchange, {as we illustrate in various examples in the Supplemental Material ~\cite{supp}}. Finally,  we also have shown the operational meaning of  guessed quantum work in terms of quantum hypothesis testing.  We expect that our results will contribute to a deeper understanding and further exploration of the role of work and heat in open quantum systems, as well as  quantum fluctuation theorems for general open quantum systems.

	
	\begin{acknowledgements}
		This work is in part supported by ARO MURI W911NF-11-1-0400 and MIT MIST-FVG. A. S. acknowledges Thomas G. Stockham Jr. Fellowship from MIT, and he is now supported by the U.S. Department of Energy, the Laboratory Directed Research and Development (LDRD) program and the Center for Nonlinear Studies at LANL. We offer our gratitude to Sebastian Deffner, Wojciech  H.  Zurek,  Yigit  Subasi,  Francesco  Caravelli, Stefano Gherardini, Stefano Ruffo, Andrea Trombettoni, Quntao  Zhuang,  and  Philippe  Faist  for helpful discussions.   We  are  also  grateful  to  Christopher  Jarzynski, Ryuji Takagi, Nicole Yunger Halpern, and Naoki Yamamoto for insightful discussions on stochastic thermodynamics.

	\end{acknowledgements}

	\bibliography{Biblio_Job}

	\newpage
	
	\onecolumngrid
	
	\setcounter{section}{0}
	\setcounter{figure}{0}
	\setcounter{corollary}{0}
	\setcounter{theorem}{0}
	\setcounter{condition}{0}
	\setcounter{proposition}{0}
	\setcounter{observation}{0}
	\setcounter{problem}{0}

	\onecolumngrid
	
	\subsection*{\Large{Supplementary Material for ``Quantum Jarzynski equality in open quantum systems from the one-time measurement scheme"} }
	\appendix
	
	\section{1. Best possible guessed state}
	
	We introduced in the main text the concept of ``guessed state". Here we show how to derive its expression following the principle of maximum entropy and the constraints imposed by the one-time measurement protocol.
	
	Initially, the system and the bath is decoupled, and the postmeasurement state of the composite system after the initial measurement is given by $|\epsilon\rangle\langle\epsilon|\otimes\tau_B$; therefore, we get a set of states after the unitary evolution $\{U_t(|\epsilon\rangle\langle\epsilon|\otimes\tau_B)U_t^{\dagger}\}_{\epsilon}$. These states are distributed based on the probability distribution $\{p(\epsilon)\}_{\epsilon}$, so that we can write the final state induced by the initial measurement as $\Theta_{SB}(t)=\sum_{\epsilon}p(\epsilon)U_t(|\epsilon\rangle\langle\epsilon|\otimes\tau_B)U_t^{\dagger}$. Then, we consider the following optimization problem. 
	
	Given a state $\Theta_{SB}(t)$:
	\begin{align*}
		\Theta_{SB}(t)=\sum_{\epsilon}p(\epsilon)U_t(|\epsilon\rangle\langle\epsilon|\otimes\tau_B)U_t^{\dagger}\,,
	\end{align*}
	let us consider the probability distribution $\{p(\epsilon)\}_{\epsilon}$ maximizing the Von-Neumann entropy $\mathcal{S}_{SB}(t)=-\Tr[\Theta_{SB}(t)\ln\Theta_{SB}(t)]$
	under the condition that 
	\begin{align*}
		\begin{split}
			&\Tr[\Theta_{SB}(t)]=1 \\
			&E_S=\Tr[(H_S(t)\otimes\openone_B)\Theta_{SB}(t)]\,,
		\end{split}
	\end{align*}
	so that
	\begin{align*}
		\begin{split}
			&\delta \Tr[\Theta_{SB}(t)]=\sum_{\epsilon}\delta p(\epsilon)=0\\
			&\delta E_S=\delta \Tr[(H_S(t)\otimes\openone_B)\Theta_{SB}(t)]=\sum_{\epsilon}\delta p(\epsilon)\Tr[(H_S(t)\otimes\openone_B)U_t(|\epsilon\rangle\langle\epsilon|\otimes\tau_B)U_t^{\dagger}]=\sum_{\epsilon}\delta p(\epsilon)\Tr[H_S(t)\Phi_t(|\epsilon\rangle\langle\epsilon|)]\,.
		\end{split}
	\end{align*}
	Here, note we only consider $H_S(t)$ because we assume that one can only measure the energy of the system. 
	Explicitly, $\Theta_{SB}(t)$ can be given by
	\begin{align*}
		\Theta_{SB}(t)=\sum_{\epsilon,q}p(\epsilon)\frac{1}{Z_B}e^{-\beta q}U_t|\epsilon,q\rangle\langle \epsilon,q|U_t^{\dagger}\,.
	\end{align*}
	Therefore, 
	\begin{align*}
		\delta \mathcal{S}_{SB}&=-\delta\Tr[\Theta_{SB}(t)\ln\Theta_{SB}(t)]\\
		&=-\sum_{\epsilon,q}\frac{e^{-\beta q}}{Z_B}\delta p(\epsilon)\Big(\ln p(\epsilon)-\ln Z_B-\beta q\Big)\\
		&=-\sum_{\epsilon}\delta p(\epsilon)\Big(\ln p(\epsilon)-\ln Z_B-\beta\Tr[H_B\tau_B]\Big)\,.
	\end{align*}

	By using the optimization method of Lagrange multipliers with constraints, we have: 
	\begin{align*}
		\delta\Big(\mathcal{S}_{SB}-\alpha E_S-\gamma\Big)&=
		-\sum_{\epsilon}\delta p(\epsilon)\Big(\ln p(\epsilon)-\beta\Tr[\tau_B H_B]-\ln Z_B
		+\alpha\Tr[H_S(t)\Phi_t(|\epsilon\rangle\epsilon|)]+\gamma+1\Big)\,.
	\end{align*}
	For any $\delta p(\epsilon)$, this has to be valid so that each term has to be independently 0. Therefore, 
	\begin{align*}
		\ln p(\epsilon)-\beta\Tr[H_B\tau_B]-\ln Z_B+\alpha\Tr[H_S(t)\Phi_t(|\epsilon\rangle\langle\epsilon|)]+\gamma+1=0\,,
	\end{align*}
	{so that we can obtain 
		\begin{align*}
			p_{\alpha}(\epsilon)\propto e^{-\alpha\Tr[H_S(t)\Phi_t(|\epsilon\rangle\langle\epsilon|)]}\,,
		\end{align*}
		where we put subscript $\alpha$ as  $p_{\alpha}(\epsilon)$ in order to emphasize the dependence of $p(\epsilon)$ on the parameter $\alpha$. Here, note that we can choose any $\alpha$, and we could have infinite numbers of guessed states. As the \textit{best} guessed state, since we do not know the final temperature, it is reasonable for us to choose $\alpha=\beta$.
		Since we have $\sum_{\epsilon}p_{\beta}(\epsilon)=1$, we can write
		\begin{align*}
			p_{\beta}(\epsilon)=\frac{e^{-\beta\Tr[H_S(t)\Phi_t(|\epsilon\rangle\langle\epsilon|)]}}{\tilde{Z}_S(t)}\,,
	\end{align*}}
	where $\tilde{Z}_S(t) =\sum_\epsilon e^{-\beta\Tr[H_S(t)\Phi_t(|\epsilon\rangle\langle\epsilon|)]}$.
	This means that the best possible guess of the thermal state of the composite system, which rises from the one-time measurement scheme, can be given by 
	\begin{align*}
		\Theta_{SB}(t)=\sum_{\epsilon}\frac{e^{-\beta\Tr[H_S(t)\Phi_t(|\epsilon\rangle\langle\epsilon|)]}}{\tilde{Z}_S(t)}U_t(|\epsilon\rangle\langle\epsilon|\otimes\tau_B)U_t^{\dagger}\,.
	\end{align*}

	\section{2. guessed quantum heat and Relative Entropy}
	\label{app:guessedheat}
	We introduce the guessed quantum heat when providing an explicit relationship between the relative entropy $D\left[\Theta_{SB}(t)||\tau_S(t)\otimes\tau_B\right]$ and the free energies, Eq.~\eqref{eq:DandHeat} of the main text. Here we provide an explicit proof  of this result.
	\begin{proof}
		First, let us calculate $\Tr\left[\Theta_{SB}(t)\ln\Theta_{SB}(t)\right]$.
		Since
		\begin{align*}
			\begin{split}
				\Theta_{SB}(t)&=\frac{1}{\tilde{Z}_S(t)}\sum_{\epsilon}e^{-\beta\Tr\left[H_S(t)\Phi_t(|\epsilon\rangle\langle\epsilon|)\right]}U_t(|\epsilon\rangle\langle\epsilon|\otimes\tau_B)U_t^{\dagger}\\
				&=\frac{1}{\tilde{Z}_S(t)}\frac{1}{Z_B}\sum_{\epsilon,q}e^{-\beta\Tr\left[H_S(t)\Phi_t(|\epsilon\rangle\langle\epsilon|)\right]}e^{-\beta q}U_t|\epsilon,q\rangle\langle\epsilon,q|U_t^{\dagger}\\
				&=\frac{1}{\tilde{Z}_S(t)}\frac{1}{Z_B}\sum_{\epsilon,q}e^{-\beta\Tr\left[(H_S(t)\otimes \openone_B)U_t(|\epsilon\rangle\langle\epsilon|\otimes\tau_B)U_t^{\dagger}\right]}e^{-\beta q}U_t|\epsilon,q\rangle\langle\epsilon,q|U_t^{\dagger}\,,
			\end{split}
		\end{align*}
		where we use the relation
		\begin{align*}
			\Tr\left[H_S(t)\Phi_t(|\epsilon\rangle\langle\epsilon|)\right]=\Tr\left[(H_S(t)\otimes \openone_B)U_t(|\epsilon\rangle\langle\epsilon|\otimes\tau_B)U_t^{\dagger}\right]\,.
		\end{align*}
		Therefore, we can obtain
		\begin{align*}
			\ln\Theta_{SB}(t)=-\ln\tilde{Z}_S(t)-\ln Z_B-\beta\sum_{\epsilon,q}\Big(\Tr\left[(H_S(t)\otimes\openone_B)U_t(|\epsilon\rangle\langle\epsilon|\otimes\tau_B)U_t^{\dagger}\right]+q\Big)U_t|\epsilon,q\rangle\langle\epsilon,q|U_t^{\dagger}\,.
		\end{align*}
		Then, we have
		\begin{align*}
			\Tr\left[\Theta_{SB}(t)\ln\Theta_{SB}(t)\right]=&-\ln\tilde{Z}_S(t)-\ln Z_B\\
			&-\beta\sum_{\epsilon,q}\Big(\Tr\left[(H_S(t)\otimes\openone_B)U_t(|\epsilon\rangle\langle\epsilon|\otimes\tau_B)U_t^{\dagger}\right]+q\Big)\cdot\frac{e^{-\beta\Tr\left[H_S(t)\Phi_t(|\epsilon\rangle\langle\epsilon|)\right]}}{\tilde{Z}_S(t)}\cdot\frac{e^{-\beta q}}{Z_B}\\
			=&-\ln\tilde{Z}_S(t)-\ln Z_B\\
			&-\beta\Tr\left[(H_S(t)\otimes\openone_B)\frac{1}{\tilde{Z}_S(t)}\sum_{\epsilon}e^{-\beta\Tr\left[H_S(t)\Phi_t(|\epsilon\rangle\langle\epsilon|)\right]}U_t(|\epsilon\rangle\langle\epsilon|\otimes\tau_B)U_t^{\dagger}\right]-\beta\sum_{q}\frac{e^{-\beta q}}{Z_B}q\\
			=&-\ln\tilde{Z}_S(t)-\ln Z_B-\beta\Tr\left[(H_S(t)\otimes\openone_B)\Theta_{SB}(t)\right]-\beta\Tr\left[H_B\tau_B\right]\,.
		\end{align*}
		Let us calculate $\Tr\left[\Theta_{SB}(t)\ln(\tau_S(t)\otimes\tau_B)\right]$. 
		Since
		\begin{align*}
			\tau_S(t)\otimes\tau_B=\frac{e^{-\beta H_S(t)}}{Z_S(t)}\otimes\frac{e^{-\beta H_B}}{Z_B}=\frac{1}{Z_S(t)Z_B}e^{-\beta(H_S(t)\otimes\openone_B+\openone_S\otimes H_B)}\,,
		\end{align*}
		we have
		\begin{align*}
			\Tr\left[\Theta_{SB}(t)\ln(\tau_S(t)\otimes\tau_B)\right]=-\ln Z_S(t)-\ln Z_B-\beta\Tr\left[(H_S(t)\otimes\openone_B)\Theta_{SB}(t)\right]-\beta \Tr\left[(\openone_S\otimes H_B)\Theta_{SB}(t)\right]\,.
		\end{align*}
		Therefore, the quantum relative entropy becomes
		\begin{align*}
			\begin{split}
				D\left[\Theta_{SB}(t)||\tau_S(t)\otimes\tau_B\right]&=\Tr\left[\Theta_{SB}(t)\ln\Theta_{SB}(t)\right]-\Tr\left[\Theta_{SB}(t)\ln(\tau_S(t)\otimes\tau_B)\right]\\
				&=-\ln\frac{\tilde{Z}_S(t)}{ Z_S(t)}-\beta\Big(\Tr\left[H_B\tau_B\right]-\Tr\left[(\openone_S\otimes H_B)\Theta_{SB}(t)\right]\Big)\,.
			\end{split}
		\end{align*}
		
	\end{proof}

	\section{3. Recovery of the closed-system case}
	We remark that our results are consistent with previous results obtained in the case of closed quantum system~\cite{Deffner16}. In closed quantum systems, there is no coupling to the bath, and the unitary evolution $U_t$ can be given by $U_t=\mathcal T\left[e^{-i\int dt H_S(t)}\right]\otimes e^{-iH_B t}$. Then, there is no energy loss to/from the bath, i.e., no heat, and the guessed quantum work is simply the exact quantum work, given by the energy difference, $\langle \tilde{W}\rangle=\langle W\rangle=\langle\Delta E\rangle$. The relative entropy, $D\left[\Theta_{SB}(t)||\tau_S(t)\otimes\tau_B\right]$, reduces to 
	\begin{align*}
		D\left[\Theta_{SB}(t)||\tau_S(t)\otimes\tau_B\right]=D\left[\tilde{\rho}_S(t)||\tau_S(t)\right]=\frac{\tilde{Z}_S(t)}{Z_S(t)}\,,
	\end{align*}
	where $\tilde{\rho}_S(t) = \textrm{Tr}_B(\Theta_{SB}(t))$ can be given explicitly by 
	\begin{align*}
		\tilde{\rho}_S(t) = \sum_{\epsilon}\frac{e^{-\beta\Tr\left[H_S(t)U_t^{(S)}|\epsilon\rangle\langle\epsilon|U_t^{(S)\dagger}\right]}}{\tilde{Z}_S(t)}U_t^{(S)}|\epsilon\rangle\langle\epsilon|U_t^{(S)\dagger}\,,
	\end{align*}
	where we define $U_t^{(S)}\equiv \mathcal{T}\left[e^{-i\int dt H_S(t)}\right]$.
	This is the close-system best possible guessed state as in Ref.~\cite{Deffner16}. In the absence of heat, the derived quantum Jarzynski equality and the maximum work in reduce to the main results of Ref.~\cite{Deffner16}:
	\begin{align*}
		\langle e^{-\beta W}\rangle_{\tilde{P}}=e^{-\beta\Delta F_S}e^{-D\left[\tilde{\rho}_S(t)||\tau_S(t)\right]}\,,
	\end{align*}
	and 
	\begin{align*}
		\langle W\rangle \geq \Delta F_S+\beta^{-1}D\left[\tilde{\rho}_S(t)||\tau_S(t)\right]\,.
	\end{align*}

	\section{4. Examples}

	We can further understand our main results by verifying our derived quantum Jarzynski equality:
	\begin{equation}
	\langle e^{-\beta \Delta E}\rangle_{\tilde{P}}=e^{-D\left[\Theta_{SB}(t)||\tau_S(t)\otimes\tau_B\right]-\beta\langle\tilde{Q}\rangle_B}\,
	\label{eq:JarzynskiSupp}
	\end{equation}
	with two toy models with different size of baths such as two-qubit dephasing and spin-boson model with time-independent Hamiltonian.
	
	In the following, $\vec{\sigma}_j=(\sigma_j^x, \sigma_j^y, \sigma_j^z)$ denotes the Pauli matrices for $j$-th spin, and $a_k$ ($a_k^{\dagger}$) is the annihilation (creation) operator of the $k$-th bosonic mode.

	The following results indicate that the system-bath interaction results in the guessed quantum work even in the composite systems characterized by the time-independent Hamiltonian.

	\subsubsection{4-1. Two-qubit dephasing model }
	Let us consider a single spin-1/2 system ($\mathcal{H}_S$) coupled to a single spin-1/2 bath ($\mathcal{H}_B$). For simplicity, let us consider a time-independent system Hamiltonian so that $\Delta F_S=0$. Here, $\vec{\sigma}_j=(\sigma_j^x, \sigma_j^y, \sigma_j^z)~(j=S,B)$ denotes the Pauli matrices for $j$-th spin.
	
	Let us consider $\sigma_S^z   \sigma_B^x$ coupling between system and bath. The total Hamiltonian becomes 
	\begin{align*}
		H= \omega_S \sigma_S^{z}+\omega_{B}\sigma^{z}_B + J \sigma_S^z   \sigma_B^x\,,
	\end{align*}
	where $J$ is the coupling strength. This simple two-qubit system models a  dephasing process for the system, as populations are preserved while coherences (initially) decay, {i.e. $\Phi_t(|\epsilon\rangle\langle\epsilon|)=|\epsilon\rangle\langle\epsilon|$, and in this case, the guessed state coincides with the exact state, i.e. $\Theta_{SB}(t)=U_t\left(\tau_S(0)\otimes\tau_B\right)U_t^{\dagger}$.} The system energy is thus conserved and we have $\langle e^{-\beta\Delta E}\rangle_{\tilde{P}} =1$.  In contrast, the backaction of the system evolution onto the bath leads to a change in energy of the bath itself,  and the guessed quantum heat and work can be given by
	
	\begin{align*}
		\langle \tilde{Q}\rangle_B =-\langle \tilde{W}\rangle=-\frac{2J^2 \omega_B \tanh(\beta \omega_B) \sin(t\sqrt{J^2+\omega_B^2})^2}{J^2+\omega_B^2}\,.
	\end{align*}
	Furthermore, we can analytically obtain
	\begin{align*}
		D&\left[\Theta_{SB}(t)||\tau_S\otimes\tau_B\right]=\beta \frac{2J^2 \omega_B \tanh(\beta \omega_B) \sin(t\sqrt{J^2+\omega_B^2})^2}{J^2+\omega_B^2}\,.
	\end{align*}
	Then, we obtain $ \beta \langle \tilde{Q} \rangle_B + D\left[\Theta_{SB} (t)|| \tau_S \otimes \tau_B \right]=0$, which verifies  Eq.~\eqref{eq:JarzynskiSupp}. Interestingly, this examples shows how our approach can well describe the scenario where the quantum ``bath'' (or environment) is small, and thus affected by a large backaction. In this case, even if there is no system energy change, we can still define heat, while the  quantum relative entropy plays the role of work performed by the system onto the bath.

	\subsubsection{4-2. Spin-boson model}
	
	Let us consider the following spin-boson model with the time-independent Hamiltonian~\cite{Schlosshauer}
	\begin{align*}
		H=\frac{\omega_0}{2}\sigma_z+\sum_k \omega_ka_k^{\dagger}a_k+\sigma_z\sum_k(g_ka_k+g_k^* a_k^{\dagger})\,. 
	\end{align*}
	In interaction picture, we obtain
	\begin{align*}
		H(t)=\sigma_z\sum_k(g_k a_k e^{-i\omega_k t}+g_k^* a_k^{\dagger}e^{+i\omega_k t})\,,  
	\end{align*}
	and by the Magnus expansion, the propagator can be simply given by
	\begin{equation}
	U_t=\exp\left[-it (H_0+H_1)\right]\,,
	\label{eq:U}
	\end{equation}
	where the higher terms are vanishing, and $H_0$ and $H_1$ are, respectively, defined as
	\begin{align*}
		\begin{split}
			H_0&\equiv\frac{1}{t}\int_{0}^{t}H(t_1)dt_1\\
			H_1&\equiv-\frac{i}{2t}\int_{0}^tdt_1\int_{0}^{t_1}dt_2 \left[H(t_1),H(t_2)\right]\,.
		\end{split}
	\end{align*}
	Then, we can obtain
	\begin{equation}
	H_0=\sigma_z\sum_{k}\left(G_k(t)a_k-G^*(t)a_k^{\dagger}\right)\,,
	\label{eq:H0}
	\end{equation}
	where 
	\begin{equation}
	G_k(t)\equiv g_k\frac{\sin(\omega_kt/2)}{\omega_kt/2}e^{-i\omega_kt/2}\,.
	\label{eq:G}
	\end{equation}
	Also, $H_1$ is given by
	\begin{equation}
	H_1 =-\sum_k\mathcal{G}_k\,,
	\label{eq:H1}
	\end{equation}
	where 
	\begin{align*}
		\mathcal{G}_k\equiv\frac{|g_k|^2}{\omega_k}\left(1-\frac{\sin(\omega_k t)}{\omega_k t}\right)\,.
	\end{align*}
	From Eq.~\eqref{eq:H0}, Eq.~\eqref{eq:H1} and Eq.~\eqref{eq:U}, the propagator becomes
	\begin{align*}
		U_t=\exp\left[-it\sum_k\left(\sigma_z(G_k(t)a_k+G_k^*(t)a_k^{\dagger})-\mathcal{G}_k\right)\right]\,.
	\end{align*}
	
	Here, we can verify $\langle e^{-\beta \Delta E}\rangle_{\tilde{P}}=1$. $\Delta E$ is defined as
	$\Delta E =\Tr\left[(H_S(t)\otimes\openone_B)U_t(|\epsilon\rangle\langle\epsilon|\otimes\tau_B)U_t^{\dagger}\right]-\epsilon$.
	Due to $H_S(t)=H_S=\frac{\omega_0}{2}\sigma_z$, we can find that $\left[H_S,U_t\right]=0$, which leads to $\Delta E=\langle\epsilon |H_S|\epsilon\rangle-\epsilon=0$ because $|\epsilon\rangle$ is an eigenbasis of $H_S$ corresponding to the eigenvalue $\epsilon$. Therefore, 
	\begin{align*}
		\langle e^{-\beta\Delta E}\rangle_{\tilde{P}}=1\,.
	\end{align*}

	We can also compute the guessed quantum heat $\langle \tilde{Q}\rangle_{B}$, which also corresponds to the negative guessed quantum work $-\langle \tilde{W}\rangle$ in this model. The definition of the guessed quantum heat is $\langle \tilde{Q}\rangle_B=\Tr\left[H_B\tau_B\right]-\Tr\left[H_B\Theta_{SB}(t)\right]$, where
	\begin{align*}
		\Theta_{SB}(t)=\sum_{\epsilon}\frac{e^{-\beta\Tr\left[H_S(t)\Phi_t(|\epsilon\rangle\langle\epsilon|)\right]}}{\tilde{Z}_S(t)}U_t(|\epsilon\rangle\langle\epsilon|\otimes\tau_B)U_t^{\dagger}\,.
	\end{align*}
	Recall that we consider the time-independent Hamiltonian $H_S(t)=H_S$. For the dephasing process, we have $\Phi_t(|\epsilon\rangle\langle\epsilon|)=|\epsilon\rangle\langle\epsilon|$ and $\langle\epsilon|H_S|\epsilon\rangle=\epsilon$.
	In this case, we have $\Theta_{SB}(t)=U_t(\tau_S\otimes\tau_B)U_t^{\dagger}$, which is the exact state of the total system. 
	Then, we have $\Tr\left[H_B\Theta_{SB}(t)\right]=\Tr\left[U_t^{\dagger}H_B U_t(\tau_S\otimes\tau_B)\right]$.
	From the relation $U_t^{\dagger}a_k U_t=a_k+it G_k(t)\sigma_z$ and  $\Tr\left[a_k^{\dagger}\tau_B\right]=\Tr\left[a_k\tau_B\right]=0$,
	we have $\langle \tilde{Q}\rangle_B=\Tr\left[H_B\tau_B\right]-\Tr\left[H_B\Theta_{SB}(t)\right]=-\sum_k\omega_k |G_k(t)|^2t^2$, which from Eq.~\eqref{eq:G} can be explicitly given by
	\begin{equation}
	\langle \tilde{Q}\rangle_B = -\sum_k\omega_k|g_k|^2\left(\frac{\sin(\omega_kt/2)}{\omega_k/2}\right)^2\,.
	\label{eq:guessedheat}
	\end{equation}
	The the noise spectral density is $J(\omega)=\sum_k|g_k|^2\omega\delta(\omega-\omega_k)$; therefore
	\begin{align*}
		\langle \tilde{Q}\rangle_B=-\int_{-\infty}^{\infty}J(\omega)\left(\frac{\sin(\omega_kt/2)}{\omega_k/2}\right)^2d\omega\,.
	\end{align*}
	Since we have $\lim_{t\to\infty}\frac{\sin(\omega t/2)}{\omega/2}=\delta\left(\omega/2\right)=2\delta(\omega)$,
	where we used the relation $\lim_{t\to\infty} t\cdot\frac{\sin(xt)}{xt}=\delta(x)$, we can obtain 
	\begin{align*}
		\lim_{t\to\infty}\langle\tilde{Q}\rangle_B=-\int_{-\infty}^{\infty}4J(\omega)\delta^2(\omega)d\omega=-4J(0)\delta(0)=0\,,
	\end{align*}
	which is consistent with our intuition that when $t\to\infty$ there will be no energy exchange between a small system and a large bath for the dephasing process.

	\section{5. Brief review of quantum Stein's lemma}
	In this section, we briefly introduce quantum Stein's lemma by following~Refs. \cite{Ogawa00, Brandao10} in our scenario. Consider that we prepare $n$ independent and identically distributed copies of $\Theta_{SB}(t)$ and $\tau_S(t)\otimes\tau_B$. We observe two POVM $\{O_n,\openone-O_n\}$ at time $t$ on unknown states. The outcome of $O_n$ concludes that the state is $\Theta_{SB}(t)$, while the outcome of $\openone-O_n$ indicates that the state is $\tau_S(t)\otimes\tau_B$. {Here, the state $\Theta_{SB}(t)$ and $\tau_S(t)\otimes\tau_B$ are seen as the null and alternative hypothesis, respectively. }
	Here, we define $\mathcal{A}_n(O_n)\equiv\Tr\left[\Theta_{SB}^{\otimes n}(t)(\openone-O_n)\right]$ as the type-I error probability that the true state is $\Theta^{\otimes n}_{SB}(t)$ while the POVM outcome indicates $(\tau_S(t)\otimes\tau_B)^{\otimes n}$. 
	We also define $\mathcal{B}_n(O_n)\equiv\Tr\left[(\tau_S(t)\otimes\tau_B)^{\otimes n}O_n\right]$ as the type-I\!I error probability that the true state is $(\tau_S(t)\otimes\tau_B)^{\otimes n}$, while the POVM outcome indicates $\Theta_{SB}^{\otimes n}(t)$. 
	Under the restriction that $\mathcal{A}_n(O_n)$ is upper bounded by a small quantity $\delta$, we consider the minimum type-I\!I error probability $\mathcal{B}_n$ defined as $\mathcal{B}_n\equiv \min_{0\leq O_n\leq \openone}\{\mathcal{B}_{n}(O_n)|\mathcal{A}_{n}(O_n)\leq \delta\}$.
	Then, quantum Stein's lemma~\cite{Ogawa00, Brandao10} states that for $0<\delta<1$ we have the following relation:
	\begin{align*}
		\lim_{n\to \infty}\frac{1}{n}\ln\left(\mathcal{B}_n\right)= - D\left[\Theta_{SB}(t)||\tau_S(t)\otimes\tau_B\right]\,.
	\end{align*}
	Therefore, the quantum relative entropy determines the scaling of the quantum hypothesis testing.

	\section{{6. Relation between the guessed quantum work and exact quantum work} }
	\label{sec:guessedVSexact}
	In this section, we discuss the relation between the guessed quantum work and the exact quantum work, which can be obtained by considering the conventional two-point measurement scheme on both the system and bath.  
	
	\subsubsection{6-1. Standard Jarzynski equality from two-point measurement scheme}
	Let us take the same setup in the main text, and suppose that we can measure the bath. We first locally measure the system and bath, and suppose that we obtained two energy values, $\epsilon$ for the system and $q$ for the bath, so that the postmeasurement state becomes $|\epsilon,q\rangle$. Then, we evolve the total system, and locally measure the system and bath again at time $t$. Suppose that we obtain two energy values, $\epsilon'$ for the system and $q'$ for the bath, so that the postmeasurement state becomes $|\epsilon',q'\rangle$. Then,  the quantum work is defined as the difference in the total energy of the system and bath along the trajectory $(\epsilon,q)\to(\epsilon',q')$. Here, $\epsilon$ and $\epsilon'$ are the energy eigenvalue of the time dependent Hamiltonian of the system $H_S(0)$ and $H_S(t)$, respectively. $q$ and $q'$ are the energy eigenvalue of the time-independent Hamiltonian $H_B$ of the bath. Therefore, the work can be defined as
	\begin{align*}
		W=\left(q'+\epsilon'\right)-\left(q+\epsilon\right)\,.
	\end{align*}
	Therefore, the Jarzynski equality from two-measurement scheme is given by
	\begin{equation}
	\langle e^{-\beta W}\rangle = \sum_{\epsilon,\epsilon'q,q'}\frac{e^{-\beta\epsilon}}{Z_S(0)}\cdot\frac{e^{-\beta q}}{Z_B}\Big|\langle \epsilon',q'|U_t|\epsilon,q\rangle\Big|^2 e^{-\beta\left(q'+\epsilon'-q-\epsilon\right)}=e^{-\beta\Delta F_S}\,,
	\label{eq:standardJarzynski}
	\end{equation}
	where $\Delta F_S \equiv F_S(t)-F_S(0)$ and 
	\begin{align*}
		F_S(t)=-\beta^{-1}\ln Z_S(t)=\beta^{-1} \Tr\left[e^{-\beta H_S(t)}\right]\,.
	\end{align*}
	Also, the expectation of the work is given by
	\begin{equation}
	\langle W\rangle = \Tr\left[\left(H_S(t)\otimes\openone_B+\openone_S\otimes H_B\right)U_t\left(\tau_S(0)\otimes\tau_B\right)U_t^{\dagger}\right]-\left(\Tr\left[H_S(0)\tau_S(0)\right]+\Tr\left[H_B\tau_B\right]\right)\,,
	\label{eq:exactwork}
	\end{equation}
	which is the exact quantum work.
	\subsubsection{6-2. Relation between the guessed quantum work and exact quantum work}
	First, let us consider the relation between the expectation value of the guessed quantum work and the exact work. The guessed quantum work is given by
	\begin{align*}
		\langle\tilde{W}\rangle =& \langle\Delta E\rangle-\langle\tilde{Q}\rangle_B\\
		=&\Tr\left[\left(H_S(t)\otimes\openone_B\right)U_t\left(\tau_S(0)\otimes\tau_B\right)U_t^{\dagger}\right]-\Tr\left[H_S(0)\tau_S(0)\right]-\left(\Tr\left[H_B\tau_B\right]-\Tr\left[(\openone_S\otimes H_B)\Theta_{SB}(t)\right]\right)\,.
	\end{align*}
	From Eq.~\eqref{eq:exactwork}, we can obtain
	\begin{align*}
		\langle \tilde{W}\rangle = \langle W\rangle + \Tr\left[\left(\openone_S\otimes H_B\right)\left(\Theta_{SB}(t)-U_t\left(\tau_S(0)\otimes\tau_B\right)U_t^{\dagger}\right)\right]\,,
	\end{align*}
	which shows that the guessed quantum work and the exact quantum work is different from each other by the energy difference between the reduced state of the bath of the best guessed state and the exact final state. They coincide with each other in the closed quantum systems and when the system undergoes the pure dephasing process, as we have shown in the main text and the examples in this Supplemental Material.

	\subsubsection{6-3. The relation between the modified and standard Jarzynski equality}

	From Eq.~\eqref{eq:standardJarzynski} and Theorem.~1, we have
	\begin{align*}
		\langle e^{-\beta \tilde{W}}\rangle_{\tilde{P}} = \langle e^{-\beta W}\rangle  e^{-D\left[\Theta_{SB}(t)||\tau_S(t)\otimes\tau_B\right]}\,.
	\end{align*}
	By applying Jensen's inequality and the monotonicity of the quantum relative entropy, we can obtain
	\begin{equation}
	\langle e^{-\beta\left(W-\langle\tilde{W}\rangle\right)}\rangle\geq e^{D\left[\Theta_{SB}(t)||\tau_S(t)\otimes\tau_B\right]}\geq e^{D\left[\tilde{\rho}_{S}(t)||\tau_S(t)\right]}\,.
	\label{eq:deviation1}
	\end{equation}
	This means that the average of the deviation of the exact quantum work from the guessed quantum work has the lower bound characterized by the quantum relative entropy $D\left[\tilde{\rho}_{S}(t)||\tau_S(t)\right]$.
	
	Furthermore, since we can also have
	\begin{align*}
		\langle e^{-\beta W}\rangle = 
		\langle e^{-\beta\tilde{W}}\rangle_{\tilde{P}}e^{+D\left[\Theta_{SB}(t)||\tau_S(t)\otimes\tau_B\right]}\,
	\end{align*}
	so that the Jensen's inequality yields the inequality for the deviation of the guessed quantum work from the exact quantum work:
	\begin{equation}
	\langle e^{-\beta\left(\tilde{W}-\langle W\rangle\right)}\rangle_{\tilde{P}} \geq e^{-D\left[\Theta_{SB}(t)||\tau_S(t)\otimes\tau_B\right]}\,.
	\label{eq:deviation2}
	\end{equation}
	Therefore, from Eq.~\eqref{eq:deviation1} and Eq.~\eqref{eq:deviation2}, we can also obtain the following inequality with respect to these two different deviations:
	\begin{align*}
		\langle e^{-\beta\left(W-\langle\tilde{W}\rangle\right)}\rangle \langle e^{-\beta\left(\tilde{W}-\langle W\rangle\right)}\rangle_{\tilde{P}}\geq 1\,.
	\end{align*}

	\section{{7. Modified Jarzynski equality for different initial temperatures of the system and bath}}
	\label{sec:differenttemperature}
	
	Let us consider the scenario that initially the temperatures of the system and bath are different each other. Let $\beta_S$ and $\beta_B$ be the initial temperature of the system and the bath, respectively, and let us define the temperature difference as
	\begin{align*}
		\Delta \beta\equiv \beta_B-\beta_S\,.
	\end{align*}
	Then, we can obtain the following Theorem.~\ref{theorem2}, which can be regarded as the extension of Jarzynski-W\'{o}jcik scenario~\cite{Jarzynski04} from the one-time measurement scheme under the restriction that the bath is inaccessible.
	\begin{theorem}
		\label{theorem2}
		When the initial temperatures of the system and bath are different from each other, 
		The Jarzynski equality for the guessed quantum work is
		\begin{align*}
			\langle e^{-\beta_S \tilde{W}}\rangle_{\tilde{P}}=e^{-\beta_S\Delta F_S}e^{-D\left[\Theta_{SB}(t)||\tau_S(t)\otimes\tau_B\right]}e^{-\Delta\beta \langle\tilde{Q}\rangle_B}\,,  
		\end{align*}
		which also yields the following principle of maximum guessed quantum work: 
		\begin{align*}
			\langle\tilde{W}\rangle \geq \Delta F_S+\beta^{-1}D\left[\tilde{\rho}_S(t)||\tau_S(t)\right]+\frac{\Delta\beta}{\beta_S}\langle\tilde{Q}\rangle_B\,,
		\end{align*}
		where 
		\begin{align*}
			\tilde{\rho}_S(t)\equiv\Tr_B\left[\Theta_{SB}(t)\right]=\sum_{\epsilon}\frac{e^{-\beta_S\Tr\left[H_S(t)\Phi_t\left(|\epsilon\rangle\langle\epsilon|\right)\right]}}{\tilde{Z}_S(t)}\Phi_t\left(|\epsilon\rangle\langle\epsilon|\right)\,.
		\end{align*}
	\end{theorem}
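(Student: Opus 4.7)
The plan is to adapt the two-step proof of Theorem~1 and Corollary~1, but to bookkeep the two temperatures $\beta_S$ and $\beta_B$ separately. The natural first step is to redefine all objects that previously depended on a single $\beta$: the system's initial Gibbs state and the partition function $Z_S(t)$ use $\beta_S$, while the bath Gibbs state $\tau_B = e^{-\beta_B H_B}/Z_B$ uses $\beta_B$. The maximum-entropy derivation in Supplemental Section~1 has its only energy constraint on the average system energy at the initial system temperature, so it still selects $\alpha=\beta_S$ and gives
\begin{align*}
\Theta_{SB}(t) = \sum_\epsilon \frac{e^{-\beta_S \Tr\left[H_S(t)\Phi_t(|\epsilon\rangle\langle\epsilon|)\right]}}{\tilde{Z}_S(t)}\, U_t(|\epsilon\rangle\langle\epsilon|\otimes\tau_B) U_t^\dagger ,
\end{align*}
which is consistent with the expression for $\tilde\rho_S(t)$ in the statement. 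Repeating the manipulation that led to Eq.~\eqref{eq:Jarzynski1} with $\beta\to\beta_S$ gives $\langle e^{-\beta_S \Delta E}\rangle_{\tilde{P}} = \tilde{Z}_S(t)/Z_S(0)$.

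Next I would redo the relative-entropy calculation of Supplemental Section~2 with both temperatures in play. The diagonal expansion of $\Theta_{SB}(t)$ in the rotated basis $U_t|\epsilon,q\rangle$ now carries the eigenvalue $\left(\tilde{Z}_S(t)Z_B\right)^{-1} e^{-\beta_S \Tr[H_S(t)\Phi_t(|\epsilon\rangle\langle\epsilon|)]} e^{-\beta_B q}$, whereas $\ln(\tau_S(t)\otimes\tau_B)$ contributes $-\ln Z_S(t) - \ln Z_B - \beta_S H_S(t)\otimes\openone_B - \beta_B \openone_S\otimes H_B$. When the two traces are subtracted, the $\beta_S$ pieces multiplying $\Tr[(H_S(t)\otimes\openone_B)\Theta_{SB}(t)]$ cancel, while the $\beta_B$ pieces combine into $-\beta_B\bigl(\Tr[H_B\tau_B] - \Tr[(\openone_S\otimes H_B)\Theta_{SB}(t)]\bigr) = -\beta_B \langle\tilde{Q}\rangle_B$. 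Hence Eq.~\eqref{eq:DandHeat} generalizes to
\begin{align*}
D\!\left[\Theta_{SB}(t)\|\tau_S(t)\otimes\tau_B\right] = -\ln\frac{\tilde{Z}_S(t)}{Z_S(t)} - \beta_B \langle\tilde{Q}\rangle_B .
\end{align*}

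Substituting this identity into $\langle e^{-\beta_S \Delta E}\rangle_{\tilde{P}} = \tilde{Z}_S(t)/Z_S(0)$ yields $\langle e^{-\beta_S\Delta E}\rangle_{\tilde{P}} = e^{-\beta_S \Delta F_S}\, e^{-D[\Theta_{SB}(t)\|\tau_S(t)\otimes\tau_B]}\, e^{-\beta_B\langle\tilde{Q}\rangle_B}$. Using $\tilde{W}=\Delta E - \langle\tilde{Q}\rangle_B$ with $\langle\tilde{Q}\rangle_B$ deterministic in the $\tilde{P}$ ensemble, multiplication by $e^{\beta_S\langle\tilde{Q}\rangle_B}$ turns the bath-heat prefactor into $\beta_S-\beta_B=-\Delta\beta$, giving the modified Jarzynski equality. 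For the second assertion I would apply Jensen's inequality $\langle e^{-\beta_S\tilde{W}}\rangle_{\tilde{P}}\geq e^{-\beta_S \langle\tilde{W}\rangle}$ and then invoke monotonicity of the quantum relative entropy under the partial trace over the bath, $D[\Theta_{SB}(t)\|\tau_S(t)\otimes\tau_B]\geq D[\tilde{\rho}_S(t)\|\tau_S(t)]$, to obtain the principle of maximum guessed quantum work.

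The main obstacle is the temperature bookkeeping in the relative-entropy calculation: one must carefully track which factor of $\beta$ descends from the system's initial temperature (inherited by the maximum-entropy weights inside $\Theta_{SB}(t)$ and by $\tau_S(t)$) and which descends from the bath's initial temperature (appearing in $\tau_B$ both as a factor of $\Theta_{SB}(t)$ and inside the product target state). Once this separation is executed, all the cancellations mirror those of Theorem~1, and the extra factor $e^{-\Delta\beta\langle\tilde{Q}\rangle_B}$ emerges as a clean remainder that vanishes in the single-temperature limit $\Delta\beta\to 0$.
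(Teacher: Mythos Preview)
Your proposal is correct and follows essentially the same route as the paper's proof: redefine all thermal objects with $\beta_S$ for the system and $\beta_B$ for the bath, repeat the relative-entropy computation of Supplemental Section~2 to obtain $D[\Theta_{SB}(t)\|\tau_S(t)\otimes\tau_B] = -\ln(\tilde{Z}_S(t)/Z_S(t)) - \beta_B\langle\tilde{Q}\rangle_B$, substitute into $\langle e^{-\beta_S\Delta E}\rangle_{\tilde P}=\tilde Z_S(t)/Z_S(0)$, split $\beta_B=\beta_S+\Delta\beta$, and then apply Jensen's inequality together with monotonicity under the partial trace. Your remark that $\langle\tilde{Q}\rangle_B$ is a deterministic constant in the $\tilde{P}$ ensemble (so it can be pulled outside the average) is exactly the mechanism the paper uses implicitly when passing from $\Delta E$ to $\tilde W$.
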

	\begin{proof}
		The proof is same to the one in Sec.~2.
		In this case, the initial state is given by
		\begin{align*}
			\tau_S(0)\otimes\tau_B=\frac{e^{-\beta_S H_S(0)}}{Z_S(0)}\otimes \frac{e^{-\beta_B H_B}}{Z_B}\,.
		\end{align*}
		Then, for the internal energy difference, we can obtain
		\begin{align*}
			\langle e^{-\beta_S\Delta E}\rangle_{\tilde{P}}=e^{-\beta_S\Delta F_S}\frac{\tilde{Z}_S(t)}{Z_S(t)}\,,    
		\end{align*}
		where
		\begin{align*}
			\tilde{Z}_S(t)&=\sum_{\epsilon}e^{-\beta_S\Tr\left[H_S(t)\Phi_t\left(|\epsilon\rangle\langle\epsilon|\right)\right]}\\
			Z_S(t)&=\Tr\left[e^{-\beta_S H_S(t)}\right]\,,
		\end{align*}
		where
		\begin{align*}
			\Phi_t\left(|\epsilon\rangle\langle\epsilon|\right)=\Tr_B\left[U_t\left(|\epsilon\rangle\langle\epsilon|\otimes\tau_B\right)U_t^{\dagger}\right]\,.
		\end{align*}
		In this case, the best guessed state is given by
		\begin{align*}
			\Theta_{SB}(t)=\sum_{\epsilon}\frac{e^{-\beta_S\Tr\left[H_S(t)\Phi_t\left(|\epsilon\rangle\langle\epsilon|\right)\right]}}{\tilde{Z}_S(t)}U_t\left(|\epsilon\rangle\langle\epsilon|\otimes\tau_B\right)U_t^{\dagger}\,.
		\end{align*}
		Then, we can obtain
		\begin{align*}
			\Tr\left[\Theta_{SB}(t)\ln\Theta_{SB}(t)\right] = -\ln\tilde{Z}_S(t)-\ln Z_B-\beta_S\Tr\left[\left(H_S(t)\otimes\openone_B\right)\Theta_{SB}(t)\right]-\beta_B\Tr\left[H_B\tau_B\right]\,.
		\end{align*}
		Also, we have
		\begin{align*}
			\Tr\left[\Theta_{SB}(t)\ln\left(\tau_S(t)\otimes\tau_B\right)\right] = -\ln Z_S(t)-\ln Z_B-\beta_S\Tr\left[\left(H_S(t)\otimes\openone_B\right)\Theta_{SB}(t)\right]-\beta_B\Tr\left[\left(\openone_S\otimes H_B\right)\Theta_{SB}(t)\right]\,.
		\end{align*}
		Therefore, we can obtain 
		\begin{align*}
			\frac{\tilde{Z}_S(t)}{Z_S(t)} &= e^{-D\left[\Theta_{SB}(t)||\tau_S(t)\otimes\tau_B\right]}e^{-\beta_B\langle\tilde{Q}\rangle_B}\\
			&=e^{-D\left[\Theta_{SB}(t)||\tau_S(t)\otimes\tau_B\right]}e^{-\beta_S\langle\tilde{Q}\rangle_B}e^{-\Delta \beta\langle\tilde{Q}\rangle_B}
		\end{align*}
		where
		\begin{equation}
		\langle\tilde{Q}\rangle_B = \Tr\left[H_B\tau_B\right]-\Tr\left[(\openone_S\otimes H_B)\Theta_{SB}(t)\right]\,,
		\label{eq:ratiopartition}
		\end{equation}
		which is the guessed quantum heat. By definition of the guessed quantum work:
		\begin{align*}
			\tilde{W}\equiv \Delta E-\langle\tilde{Q}\rangle_B\,,
		\end{align*}
		we can obtain
		\begin{align*}
			\langle e^{-\beta_S \tilde{W}}\rangle_{\tilde{P}}=e^{-\beta_S\Delta F_S}e^{-D\left[\Theta_{SB}(t)||\tau_S(t)\otimes\tau_B\right]}e^{-\Delta\beta \langle\tilde{Q}\rangle_B}\,.   
		\end{align*}
		Applying Jensen's inequality and the monotonicity of the quantum relative entropy, we can obtain
		\begin{align*}
			\langle\tilde{W}\rangle \geq \Delta F_S+\beta_S^{-1}D\left[\tilde{\rho}_S(t)||\tau_S(t)\right]+\frac{\Delta\beta}{\beta_S}\langle\tilde{Q}\rangle_B\,,
		\end{align*}
		where 
		\begin{align*}
			\tilde{\rho}_S(t)\equiv\Tr_B\left[\Theta_{SB}(t)\right]=\sum_{\epsilon}\frac{e^{-\beta_S\Tr\left[H_S(t)\Phi_t\left(|\epsilon\rangle\langle\epsilon|\right)\right]}}{\tilde{Z}_S(t)}\Phi_t\left(|\epsilon\rangle\langle\epsilon|\right)\,.
		\end{align*}
	\end{proof}

\end{document}